\documentclass[submission,copyright,creativecommons]{eptcs}
\usepackage{breakurl}             
\usepackage{underscore}           

\usepackage{amsmath,amsfonts,amssymb,amsthm}
\usepackage{graphicx,color}
\usepackage[ruled,linesnumbered,algo2e,vlined]{algorithm2e}
\usepackage{setspace}
\usepackage{thm-restate}

\theoremstyle{theorem}
\newtheorem{theorem}{Theorem}

\newtheorem{corollary}{Corollary}
\theoremstyle{definition}

\newcommand{\mD}{\Sigma}
\newcommand{\emptystr}{\epsilon}
\newcommand{\None}{\textrm{null}}
\newcommand{\row}{\mathrm{row}}
\newcommand{\MQ}{\textrm{MQ}}
\newcommand{\Size}[1]{|#1|}
\newcommand{\Res}{\mathit{Res}}
\newcommand{\Prm}{\mathit{Prm}}
\newcommand{\dnt}[1]{[\![#1]\!]}
\newcommand{\DD}{\Phi}

\newcommand{\mrm}[1]{\mathrm{#1}}
\newcommand{\mcal}[1]{\mathcal{#1}}

\newcommand{\tsc}[1]{\textsc{#1}}

\newcommand{\Lstar}{\ensuremath{\textsf{L}*}}
\newcommand{\NLstar}{\ensuremath{\textsf{NL}*}}
\newcommand{\MATstar}{\ensuremath{\textsf{MAT}*}}


\SetCommentSty{mycommfont}



\title{Query Learning Algorithm for \\ Residual Symbolic Finite Automata}

\author{
	Kaizaburo Chubachi \email{kaizaburo\_chubachi@shino.ecei.tohoku.ac.jp}
\and
	Diptarama Hendrian \email{diptarama@tohoku.ac.jp}
\and
	Ryo Yoshinaka \email{ryoshinaka@tohoku.ac.jp}
\and
	Ayumi Shinohara \email{ayumis@tohoku.ac.jp}
\and\\[-30pt]
\institute{Graduate School of Information Sciences, Tohoku University, Japan}
}

\begin{document}

\maketitle

\begin{abstract}
We propose a query learning algorithm for residual symbolic finite automata (RSFAs).
Symbolic finite automata (SFAs) are finite automata whose transitions are labeled by predicates over a Boolean algebra, in which a big collection of characters leading the same transition may be represented by a single predicate.
Residual finite automata (RFAs) are a special type of non-deterministic finite automata which can be exponentially smaller than the minimum deterministic finite automata and have a favorable property for learning algorithms.
RSFAs have both properties of SFAs and RFAs and can have more succinct representation of transitions and fewer states than RFAs and deterministic SFAs accepting the same language.
The implementation of our algorithm efficiently learns RSFAs over a huge alphabet and outperforms an existing learning algorithm for deterministic SFAs.
The result also shows that the benefit of non-determinism in efficiency is even larger in learning SFAs than non-symbolic automata.
\end{abstract}

\section{Introduction}

Learning regular languages has been extensively studied because of its wide varieties of applications in many fields such as pattern recognition, model checking, data mining and computational linguistics~\cite{Higuera05}.
Angluin~\cite{Angluin1987} presented an algorithm $\Lstar$ which learns the minimum deterministic finite automaton (DFA) accepting an unknown target language using \emph{membership queries (MQs)} and \emph{equivalence queries (EQs)}.
An MQ asks whether a string selected by the learner is a member of the target language or not.
An EQ asks whether the learner's hypothesis automaton accepts exactly the target language or not.
If not, the learner gets a counterexample from the symmetric difference of the hypothesis and target languages.
A teacher who can answer those two types of queries is called a \emph{minimally adequate teacher (MAT)}.
A number of different learning algorithms working under the MAT model have been designed for regular languages~\cite{IsbernerHS14,KearnsV94,RivestSchapire1993}.
These works have been found in applications such as specification generation~\cite{HowarS16, MargariaNRS04} and model verification~\cite{Leucker06}.
Recently, the algorithm is also used to extract a DFA representing behavior of recurrent neural networks~\cite{WeissGY18}.

In such applications, alphabets tend to be extremely large and structured.
The size of DFA representation grows linearly in the size of the alphabet
and the number of queries needed to learn a language over the alphabet also grows linearly.
Such difficulty can be alleviated by using \emph{symbolic finite automata (SFAs)}~\cite{VeanesHT10}.
An SFA has transitions that carry predicates over a Boolean algebra.
By using an algebra and its predicates suitable for the languages to represent,
we can make the representing SFA and learning processes for them more efficient.
For example, the edge from $q_0$ to $q_1$ is labeled with $6,7,8,9$ in the DFA in Fig.~\ref{fig:automata}(a), 
while the symbolic representation of it will be $\neg (X \leq 5)$ in Fig.~\ref{fig:automata}(b), 
 where $X$ is a free variable for which an input character is substituted.
One of the first query learning algorithms targeting some types of SFAs has been proposed by Mens and Maler~\cite{MensM15}.
Their algorithm assumes a stronger teacher than MAT, but it works efficiently over large ordered alphabets such as $\mathbb{N}$ or $\mathbb{R}$.
After that, several query learning algorithms which work under the standard MAT model have been proposed~\cite{Argyros2018,ArgyrosSKK16, DrewsD17, MalerM17}.
In particular, the algorithm $\MATstar$ given by Argyros and D'Antoni~\cite{Argyros2018} is quite generic.
It learns SFAs over any algebra when an efficient learning algorithm for the underlying algebra is available.
For example, as there exists a learning algorithm for binary decision diagrams (BDDs)~\cite{Nakamura05},
SFAs whose transitions carry BDDs can be learned by the algorithm.

Another way to represent a regular language compactly is to introduce non-determinism.
Denis et al.~\cite{Denis2002} have proposed \emph{residual finite automata (RFAs)}, which are a special kind of non-deterministic finite automata,
and presented nice properties of them including the fact that an RFA can be exponentially smaller than the minimum DFA accepting the same language.
Figure~\ref{fig:automata}(c) shows an RFA that accepts the same language as the DFA in Figure~\ref{fig:automata}(a).
Bollig et al.~\cite{Bollig2009} proposed an Angluin style learning algorithm $\NLstar$ for RFAs based on those nice properties
 and their experimental results demonstrated that $\NLstar$ needs fewer queries than $\Lstar$ in practice.

\begin{figure}[t]
	\centering
	\begin{minipage}[t]{0.4\hsize}
		\centering
		\includegraphics[scale=0.55]{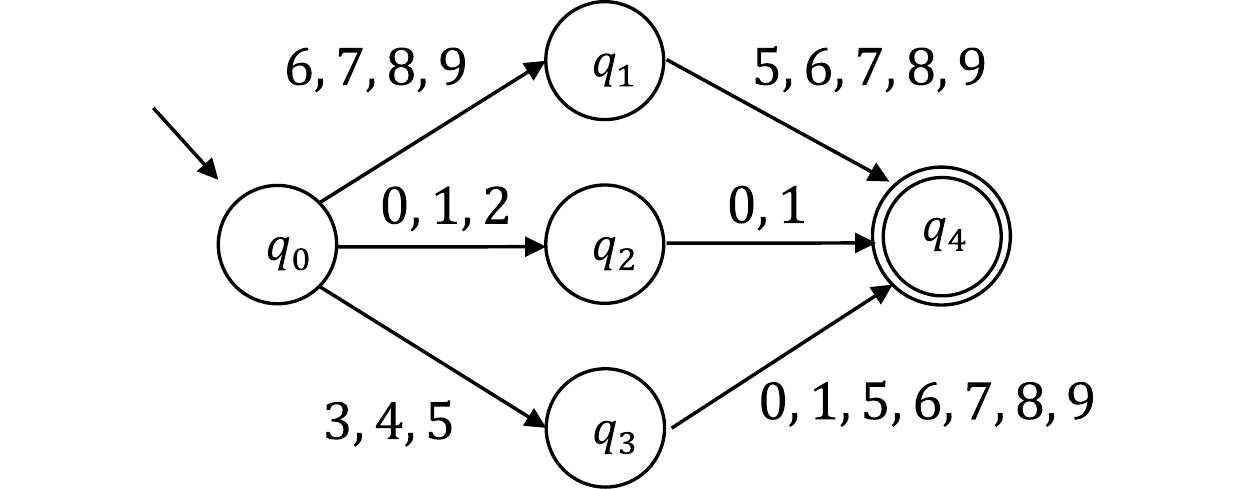}\\
		\ \ \ \small{(a) DFA \\ \ }
	\end{minipage}
	\quad\quad\quad
	\begin{minipage}[t]{0.5\hsize}
		\centering
		\includegraphics[scale=0.55]{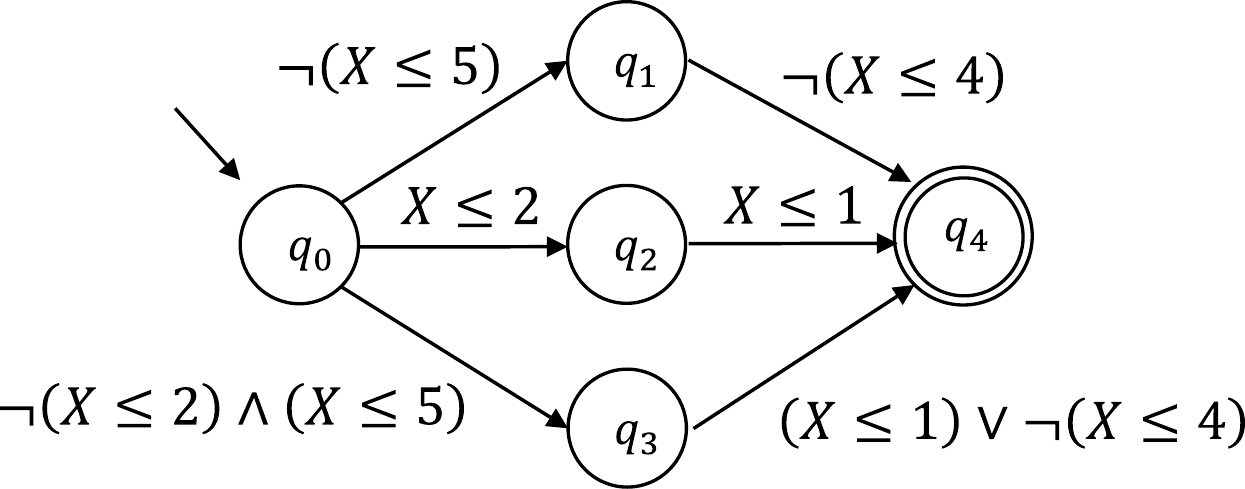}\\
		\ \ \ \small{(b) DSFA}
	\end{minipage}
	\\
	\begin{minipage}[t]{0.4\hsize}
		\centering
		\includegraphics[scale=0.55]{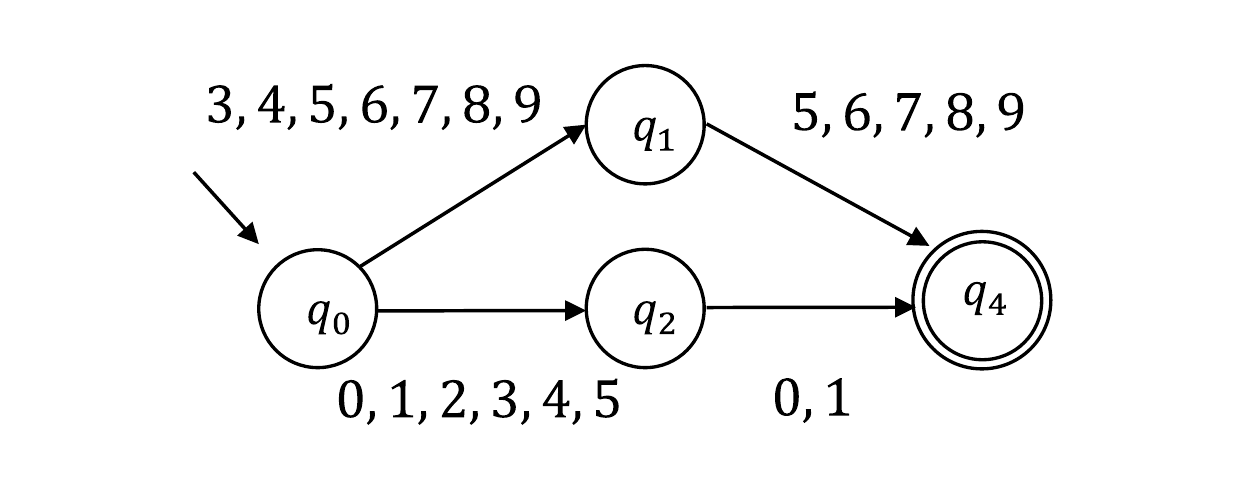}\\
		\ \ \ \small{(c) RFA \\ \ }
	\end{minipage}
	\quad\quad\quad
	\begin{minipage}[t]{0.5\hsize}
		\centering
		\includegraphics[scale=0.55]{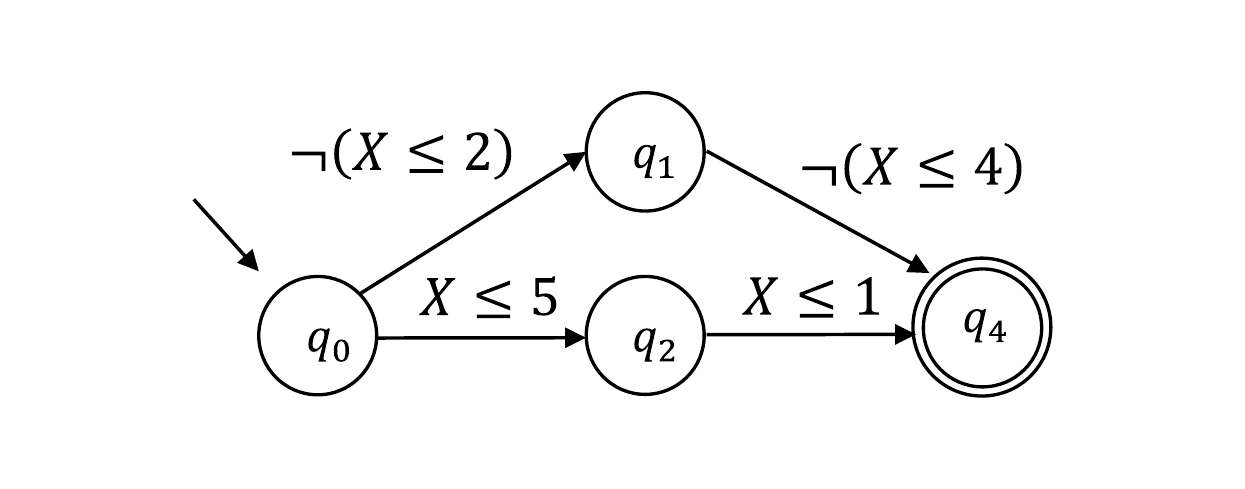}\\
		\ \ \ \small{(d) RSFA}
	\end{minipage}
	\caption{Examples of DFA, DSFA, RFA and RSFA over $\{a\in \mathbb{N}\mid 0\leq a \leq 9\}$. Every automaton accepts the same language and has the least number of states in each class.
	The dead state of the D(S)FA, from which one cannot reach the accepting state, is omitted.
	\label{fig:automata}}
\end{figure}

In this paper, we propose a learning algorithm for non-deterministic SFAs, which we call \emph{residual symbolic finite automata (RSFAs)} to pursue further compact representations and efficient learning.
Figure~\ref{fig:automata}(d) shows an example of an RSFA.
The algorithm can be seen as a combination of $\MATstar$~\cite{Argyros2018} and $\NLstar$~\cite{Bollig2009}.
We prove that our algorithm learns target languages using RSFAs under the MAT model
 and present upper bounds on the numbers of EQs and MQs required.
We also present experimental results that compare our algorithm with $\MATstar$.
We observe that the proposed algorithm asks much fewer EQs and MQs than $\MATstar$ like Bollig et al.~\cite{Bollig2009} have demonstrated for non-symbolic automata.
Yet, as we will discuss in more detail in Section~\ref{sec:experiments}, the impact looks significantly bigger in the learning of SFAs than non-symbolic automata.

As a byproduct of our algorithm analysis, we propose an improvement for $\NLstar$ that reduces the worst case query complexity.

%

\section{Preliminaries}

\subsection{Learning under Minimally Adequate Teacher}

Query learning (also called active learning) is a learning model where the learner constructs a representation of an unknown target language by actively asking queries about the language.
A most representative setting is learning under a \emph{minimally adequate teacher (MAT)}, proposed by Angluin~\cite{Angluin1987}.
For a target language $L_*\subseteq\Sigma^*$ over an alphabet $\Sigma$, a MAT answers two types of queries.
The first type is a \emph{membership query (MQ)}, whose instance is a string $w\in\Sigma^*$ selected by the learner.
The answer to an MQ on $w$, denoted by $\MQ(w)$, is $\MQ(w)=+$ if $w\in L_*$ and $\MQ(w)=-$ otherwise.
The second type is an \emph{equivalence query (EQ)}, whose instance is a hypothesis $\mathcal{H}$ that represents a language $L(\mcal{H})$.
The answer to an EQ is ``yes'' if the language $L(\mathcal{H})$ is equal to the target language $L_*$.
Otherwise, the answer is a counterexample from the symmetric difference $(L_* \setminus L(\mathcal{H})) \cup (L(\mathcal{H}) \setminus L_*)$ arbitrarily chosen by the teacher.
For convenience, we say that an algorithm \emph{learns a class $\mcal{R}$} of representations if it finally acquires a representation in $\mcal{R}$ that represents an arbitrary target language in $\{ L(\mcal{H}) \mid \mcal{H} \in \mcal{R}\}$.

Angluin has proposed a polynomial time algorithm $\Lstar$ for learning deterministic finite automata (DFA). 
Her algorithm $\Lstar$ has been improved in theoretical and practical query efficiency and memory efficiency~\cite{IsbernerHS14, RivestSchapire1993}.

\subsection{Residual Finite Automata}

An ($\epsilon$-free) non-deterministic finite automaton (NFA) is a quintuple $M=(\Sigma, Q, Q_0, F, \delta)$, where $\Sigma$ is a finite alphabet, $Q$ is a finite set of states, $Q_0\subseteq Q$ is the set of initial states, $F\subseteq Q$ is the set of final states, and $\delta \colon Q\times\Sigma \rightarrow 2^Q$ is the transition function.
An NFA is called deterministic (DFA) if $\Size{\delta(q, a)}=1$ for all $q\in Q$ and $a\in \Sigma$.
The transition function $\delta$ is extended to $\hat{\delta} \colon Q\times\Sigma^*\rightarrow 2^Q$ so that $\hat{\delta}(q, \emptystr)=\{q\}$ and $\hat{\delta}(q, aw)=\bigcup_{q'\in \delta(q, a)} \hat{\delta}(q', w)$ for $q\in Q$, $a\in\Sigma$ and $w\in\Sigma^*$.
We use $\delta$ to denote $\hat{\delta}$.
A string $w\in\Sigma^*$ is accepted by $M$ if $\delta(Q_0, w)\cap F\neq \emptyset$.
For each state $q\in Q$, the language \emph{accepted by $q$} is $L_q = \{w\in\Sigma^* \mid \delta(q, w) \cap F \neq \emptyset\}$.
The language accepted by $M$ is $L(M)=\bigcup_{q\in Q_0} L_{q}$.

A language $L'$ is a \emph{residual language} of $L$ if there exists $u\in\Sigma^*$ such that $L'= u^{-1}L= \{v\in\Sigma^* \mid uv\in L\}$.
The set of residual languages of $L$ is denoted by $\Res(L)$.
A \emph{residual finite automaton (RFA)}~\cite{Denis2002} is an NFA such that $L_q \in \Res(L(M))$ for every state $q\in Q$.
In other words, each state of an RFA accepts a residual language of $L(M)$.
It is not necessary that every residual language of $L(M)$ must be accepted by a single state.
A residual language $u^{-1}L(M)$ is the union of languages accepted by the states reached by reading $u$ from the initial states. 
Denis et al.~\cite{Denis2002} showed that an RFA can be exponentially smaller than the minimum DFA accepting the same language.

A language $L$ over a (possibly infinite) alphabet is called \emph{prime} in a class $\mathbb{L}$ of languages if it is not equal to the union of the languages it strictly contains, i.e., $L \neq \bigcup\{\, L' \in \mathbb{L} \mid L' \subsetneq L\,\}$.
The set of primes in $\mathbb{L}$ is denoted by $\Prm(\mathbb{L})$.
Denis et al.~\cite{Denis2002} showed that an RFA $M$
has the minimum number of states among RFAs accepting the same language if and only if $\Size{Q}=\Size{\Prm(\Res(L(M)))}$ and $L_q\in \Prm(\Res(L(M)))$ for each $q \in Q$.
Such an RFA is called \emph{reduced}, since no states can be deleted without changing its language.
Each state of the reduced RFA corresponds to a unique prime residual language of $L(M)$.
For a regular language $L$, the \emph{canonical} RFA of $L$ is $(\Sigma, Q, Q_0, F, \delta)$ where $Q=\Prm(\Res(L))$, $Q_0=\{L' \in Q \mid L' \subseteq L\}$, $F=\{L'\in Q \mid \emptystr\in L'\}$ and $\delta(L_1, a)=\{L_2 \in Q \mid L_2 \subseteq a^{-1}L_1\}$.
The canonical RFA is reduced and has saturated transitions (i.e.\ no transition can be added without modifying the language accepted by the RFA).

Bollig et al.~\cite{Bollig2009} has proposed an algorithm $\NLstar$ for learning RFAs extending Angluin's algorithm $\Lstar$ for DFAs.
It constructs (an RFA isomorphic to) the canonical RFA of the target language.
The theoretical upper bound on the number of queries required by $\NLstar$ is higher than $\Lstar$ for the same regular language.
However, their experimental results show that $\NLstar$ practically makes fewer queries than $\Lstar$ does.

\subsection{Symbolic Finite Automata}

\emph{Symbolic finite automata (SFAs)} are finite automata which have more expressive transitions than NFAs.
In an SFA, transitions carry unary predicates over an effective Boolean algebra $\mathcal{A}$ on a (typically huge or infinite) alphabet $\mD$.
Transitions whose predicates are satisfied by the read character $a\in\mD$ are executed.

An effective Boolean algebra is a tuple $\mathcal{A}=(\mD, \varPsi, \dnt{\_}, \bot, \top, \vee, \wedge, \neg)$, where $\mD$ is an alphabet, $\varPsi$ is a set of unary predicates closed under the Boolean connectives, and $\dnt{\_} \colon \varPsi \rightarrow 2^{\mD}$ is a denotation function such that (i) $\dnt{\bot}=\emptyset$, (ii) $\dnt{\top} = \mD$, and (iii) for all $\varphi, \psi \in \varPsi$, $\dnt{\varphi \vee \psi} = \dnt{\varphi} \cup \dnt{\psi}, \dnt{\varphi \wedge \psi} = \dnt{\varphi} \cap \dnt{\psi}$, and $\dnt{\neg\varphi} = \mD \setminus \dnt{\varphi}$.
We assume it is decidable whether $\dnt{\varphi} = \emptyset$ for any $\varphi \in \varPsi$ and moreover there is an effective procedure to find an element of $\dnt{\varphi}$ unless $\dnt{\varphi} = \emptyset$.

An SFA is a quintuple $M=(\mathcal{A}, Q, Q_0, F, \Delta)$, where $\mathcal{A}$ is an effective Boolean algebra, $Q$ is a finite set of states, $Q_0\subseteq Q$ is the set of initial states, $F\subseteq Q$ is the set of final states, $\Delta \subseteq Q \times \varPsi \times Q$ is the finite transition relation.
When the transition edge from $q$ to $q'$ has a predicate label $\varphi$, i.e., $(q, \varphi, q')\in \Delta$, this means that the transition is executed when $\varphi$ is satisfied by the reading character.
That is, $\Delta$ induces the transition function $\delta\colon Q\times\mD\rightarrow 2^Q$ such that $\delta(q, a)=\{q' \in Q \mid (q, \varphi, q')\in \Delta,\, a\in\dnt{\varphi}\}$ for $q\in Q$ and $a\in\mD$.
We extend $\delta$ to $\delta\colon Q\times\mD^*\rightarrow 2^Q$ in the same way as for (non-symbolic) FAs.
Without loss of generality, we may assume that each pair of states $q,q' \in Q$ has just one predicate $\varphi$ such that $(q,\varphi,q') \in \Delta$, since $\bot \in \varPsi$ and $\varPsi$ is closed under union.
Let $L_q=\{w\in\mD^* \mid \delta(q, w) \cap F \neq \emptyset \}$ for $q \in Q$.
The language $L(M)$ accepted by $M$ is $\bigcup_{q\in Q_0}L_q$.
An SFA is called deterministic if $|Q_0|=1$ and $\Size{\delta(q, a)}=1$ for all $q\in Q$ and $a\in \mD$.
SFAs inherit many virtues of (non-symbolic) FAs.
For example, one can effectively obtain the minimum DSFAs from SFAs and decide equivalence of two SFAs~\cite{VeanesHT10}.

Argyros and D'Antoni~\cite{Argyros2018} have given a MAT learner $\MATstar$ for deterministic SFAs (DSFAs), assuming that a MAT learner $\Lambda$ for $\varPsi$ is available.
That is, $\Lambda$ can learn $ \dnt{\varphi} \subseteq \mD$ for an arbitrary predicate $\varphi \in \varPsi$ with a MAT. 
The algorithm $\MATstar$  uses instances $\Lambda^{(q,q')}$ of $\Lambda$ to identify the predicate label of the transition edge from $q$ to $q'$
and pretends to be a MAT for those predicate learner instances.
Through communication between those predicate learners and the real MAT for the DSFA, it constructs a hypothesis DSFA.
Accordingly, the query complexity of $\MATstar$ depends on the design of $\Lambda$.
In general, it requires very much less MQs than classical MAT learners using DFAs, when the alphabet is big but finite.
If the alphabet is infinite, classical MAT learners have no hope to learn the language.

The target of this paper is \emph{residual symbolic finite automata (RSFAs)}.
An RSFA $M$ is an SFA such that $L_q\in \Res(L(M))$ for every state $q \in Q$.
It is called \emph{reduced} if $|Q|=|\Prm(\Res(L(M)))|$ and $L_q \in \Prm(\Res(L(M)))$ for each $q \in Q$.

\section{Learning Algorithm for Residual Symbolic Finite Automata}
\newcommand{\algMain}{
\begin{algorithm2e}[t]\small
	\caption{RFSA Learning algorithm\label{alg:main}}
	\SetVlineSkip{0.4mm}
	\SetKwFor{Loop}{loop}{}{end}
	initialize $\mathcal{T}\leftarrow (U, V, T)$ with $U=V=\{\emptystr\}$\;
	$\mathcal{H}\leftarrow\None$\;
	\Loop{}{
		\While{\upshape $\mathcal{H}$ is $\None$}{
			$\mathcal{H}\leftarrow\textrm{build\_hypothesis}(\mathcal{T})$ \tcp*[l]{Algorithm~\ref{alg:build-hypothesis}}
			$\mathcal{H}\leftarrow\textrm{confirm\_conditions}(\mathcal{T}, \mathcal{H})$ \tcp*[l]{Algorithm~\ref{alg:confirm-conditions}}
	}
		ask an EQ on $\mathcal{H}$\;
		\uIf{\upshape the teacher replies with a counterexample $w$}{
			$\mathcal{H}\leftarrow\textrm{process\_conterexample}(\mathcal{T}, \mathcal{H}, w)$ \tcp*[l]{ Algorithm~\ref{alg:process-counterexample}}
			$\mathcal{H}\leftarrow\textrm{confirm\_conditions}(\mathcal{T}, \mathcal{H})$ \tcp*[l]{Algorithm~\ref{alg:confirm-conditions}}
		}\lElse{%
			\Return $\mathcal{H}$ and terminate}
	}
\end{algorithm2e}
}

\newcommand{\algBH}{
\begin{algorithm2e}[t]\small
	\caption{\textrm{build\_hypothesis}$(\mathcal{T})$\label{alg:build-hypothesis}}
	\SetVlineSkip{0.4mm}
	$Q \leftarrow \{u\in U \mid \row(u)\in \Prm(\{\row(u') \mid u' \in U \})\,\}$\;
	$Q_0 \leftarrow \{u \in Q \mid \row(u) \subseteq \row(\emptystr)\}$\;
	$F \leftarrow \{u \in Q \mid \emptystr \in \row(u) \}$\;
	$\mathcal{H}\leftarrow (\mathcal{A}, Q, Q_0, F, \emptyset)$\;
	\For{$(q, q')\in Q \times Q$}{
		initialize the algorithm $\Lambda^{(q, q')}$\;
		$\mathcal{H}\leftarrow \textrm{update\_transition}(q, q', \Lambda^{(q, q')}, \mathcal{T}, \mathcal{H})$ \tcp*[l]{Algorithm~\ref{alg:update-transition}}
		\lIf{\upshape $\mathcal{H}$ is $\None$}{%
			\Return $\None$
		}
	}
	\Return $\mathcal{H}$\;
\end{algorithm2e}
}

\newcommand{\algUT}{
\begin{algorithm2e}[t]\small
	\caption{\textrm{update\_transition}$(q, q', \Lambda^{(q, q')}, \mathcal{T}, \mathcal{H})$\label{alg:update-transition}}
	\SetVlineSkip{0.4mm}
	\Repeat{\upshape $\Lambda^{(q, q')}$ asks an \tsc{eq} on a hypothesis $\varphi$}{
		$\Lambda^{(q, q')}$ asks an \tsc{mq} on $a\in\mD$\;
		$\mit{temp\_row} \leftarrow \{v\in V\mid \MQ(qav)=+ \}$\;\label{all:mq2}
		\If{\upshape $temp\_row \in \Prm(\{\row(u) \mid u \in U \} \cup \{temp\_row\}) \setminus \Prm(\{\row(u) \mid u \in U \})$}{
			extend $\mathcal{T}$ to $(U \cup \{ua\}, V, T')$ by MQs\;
			\Return \None\;
		}
		\lIf{\upshape $\row(q')\subseteq temp\_row$}{%
			answer the \tsc{mq} by $+$%
		}\lElse{%
			answer the \tsc{mq} by $-$%
		}
	}
	$\Delta'\leftarrow \{(q_1, \psi, q_2)\in \Delta \mid q_1 \neq q \vee q_2 \neq q'\} \cup \{(q, \varphi, q')\}$\;
	\Return $(\mathcal{A}, Q, Q_0, F, \Delta')$\;
\end{algorithm2e}
}

\newcommand{\algCC}{
\begin{algorithm2e}[h!]\small\setstretch{0.98}
	\caption{\textrm{confirm\_conditions}$(\mathcal{T}, \mathcal{H})$\label{alg:confirm-conditions}}	
	\SetVlineSkip{0.3mm}
	\lIf{\upshape $\mathcal{H}$ is $\None$}{\Return \None}
	\tcp{Confirm Condition~1}
	\For{\upshape $(q, q')\in Q^2$ such that $\row(q) \subseteq \row(q')$}{
		\For{$x \in Q$}{
			find $\varphi, \varphi'$ such that $(q, \varphi, x), (q', \varphi', x)\in \Delta$\;
			\If{\upshape $[\![\varphi \wedge \neg\varphi']\!]\neq \emptyset$}{
				find $a\in[\![\varphi \wedge \neg\varphi']\!]$\;
				\uIf{\upshape $\row(x)\not\subseteq \{v\in V\mid \MQ(qav)=+ \}$}{
					give $a$ to $\Lambda^{(q, x)}$ as counterexample\;
					$\mathcal{H}\leftarrow\textrm{update\_transition}(q, x, \Lambda^{(q, x)}, \mathcal{T}, \mathcal{H})$ \tcp*[l]{Algorithm~\ref{alg:update-transition}}
					\Return $\textrm{confirm\_conditions}(\mathcal{T}, \mathcal{H})$ \tcp*[l]{Algorithm~\ref{alg:confirm-conditions}}
				}
				\uElseIf{\upshape $\row(x)\subseteq \{v\in V\mid \MQ(q'av)=+ \}$}{
					give $a$ to $\Lambda^{(q', x)}$ as counterexample\;
					$\mathcal{H}\leftarrow\textrm{update\_transition}(q', x, \Lambda^{(q', x)}, \mathcal{T}, \mathcal{H})$ \tcp*[l]{Algorithm~\ref{alg:update-transition}}
					\Return $\textrm{confirm\_conditions}(\mathcal{T}, \mathcal{H})$ \tcp*[l]{Algorithm~\ref{alg:confirm-conditions}}
				}
				\Else{
					find $v\in V$ s.t.\ $v\in\row(x)$ and $\MQ(q'av)=-$ and 
					extend $\mathcal{T}$ to $(U, V \cup \{av\}, T')$ by MQs\;
					\Return $\None$\;
				}
			}
		}
	}
	\tcp{Confirm Condition~2}
	\For{\upshape $u \in U\setminus\{\emptystr\}$ in length ascending order}{
		\For{\upshape $x \in \delta(Q_0, u)$}{
			\If{\upshape $\row(x)\not\subseteq\row(u)$}{
				find $x'\in \delta(Q_0, u')$ such that $x\in\delta(x', a)$, where $u=u'a$ with $a\in\mD$\;
				\uIf{\upshape $\row(x)\not\subseteq \{v\in V \mid \MQ(x'av)=+ \}$}{
					give $a$ to $\Lambda^{(x', x)}$ as counterexample\;
					$\mathcal{H}\leftarrow\textrm{update\_transition}(x', x, \Lambda^{(x', x)}, \mathcal{T}, \mathcal{H})$ \tcp*[l]{Algorithm~\ref{alg:update-transition}}
					\Return $\textrm{confirm\_conditions}(\mathcal{T}, \mathcal{H})$ \tcp*[l]{Algorithm~\ref{alg:confirm-conditions}}
				}\Else(\tcp*[h]{We have $\row(x')\subseteq\row(u')$ and $\row(x'a)\not\subseteq\row(u'a)$}){%
					find $v\in V$ s.t.\ $\MQ(x'av)=+$ and $v\notin\row(u)$ and 
					extend $\mathcal{T}$ to $(U, V \cup \{av\}, T')$ by MQs\;
					\Return $\None$\;
				}
			}
		}
	}
	\tcp{Confirm Condition~3}
	\For{\upshape $v \in V\setminus\{\emptystr\}$}{
		\If{\upshape $\exists q\in Q, v\in\row(q) \Leftrightarrow v \notin L_q$}{
			find $v'\in\mD^*$, $a\in\mD$ and $q_2 \in Q$ such that \\ \quad $av'$ is suffix of $v$, $\forall q_1\in Q, \MQ(q_1v')=+ \Leftrightarrow v'\in L_{q_1}$ and $\MQ(q_2av')=+ \Leftrightarrow av'\notin L_{q_2}$\;\label{all:mq5}
			$temp\_row \leftarrow \{v'' \in V \mid \MQ(q_2av'') = + \}$\;
			\For{\upshape $q_3\in Q$}{
				\If{$\row(q_3)\subseteq temp\_row \Leftrightarrow q_3 \notin \delta(q_2, a)$}{
					give $a$ to $\Lambda^{(q_2, q_3)}$ as a counterexample\;
					$\mathcal{H}\leftarrow\textrm{update\_transition}(q_2, q_3, \Lambda^{(q_2, q_3)}, \mathcal{T}, \mathcal{H})$ \tcp*[l]{Algorithm~\ref{alg:update-transition}}
					\Return $\textrm{confirm\_conditions}(\mathcal{T}, \mathcal{H})$ \tcp*[l]{Algorithm~\ref{alg:confirm-conditions}}
				}
			}
			\uIf{\upshape $\MQ(q_2av') = -$ or $(\exists u\in U, \row(u)=temp\_row \wedge \MQ(uv')=+)$}{
				extend $\mathcal{T}$ to $(U, V \cup \{v'\}, T')$ by MQs\;
			}\lElse{%
				extend $\mathcal{T}$ to $(U \cup \{q_2a\}, V \cup \{v'\}, T')$ by MQs}
			\Return $\None$\;
		}
	}
	
	\Return $\mathcal{H}$\;
\end{algorithm2e}
}

\newcommand{\algPC}{
\begin{algorithm2e}[t]\small
	\caption{\textrm{process\_counterexample}$(\mathcal{T}, \mathcal{H}, w)$\label{alg:process-counterexample}}
	\SetVlineSkip{0.4mm}
	\eIf{\upshape $\bigvee_{q \in Q_0} {\MQ(qw)} \neq \MQ(w)$}{\label{all:mq3}
		extend $\mathcal{T}$ to $(U, V\cup\{w\}, T')$ by MQs\;\label{alg:ce-extend-1}
		\Return $\None$\;
	}{
		\scalebox{0.98}[1]{find $u, v\in\mD^*$ and $a\in\mD$ s.t.\ $w=uav$ and $\bigvee_{q \in \delta(Q_0, u)} \MQ(qav) \neq \bigvee_{q' \in \delta(Q_0, ua)} {\MQ(q'v)}$}\;\label{all:mq4}
		\eIf{\upshape $\bigvee_{q \in \delta(Q_0, u)} {\MQ(qav)}=+$}{
			find $q\in\delta(Q_0, u)$ such that $\MQ(qav)=+$\;
			$temp\_row \leftarrow \{v'\in V\mid MQ(qav')=+ \}$\;
			\eIf{\upshape $\exists q'\in Q, \row(q')\subseteq temp\_row \wedge q'\notin\delta(q, a)$}{
				give $a$ to $\Lambda^{(q, q')}$ as a counterexample\;
				\Return $\textrm{update\_transition}(q, q', \Lambda^{(q, q')}, \mathcal{T}, \mathcal{H})$ \tcp*[l]{Algorithm~\ref{alg:update-transition}}
			}{
				\uIf{\upshape $\exists u'\in U, \row(u') = temp\_row \wedge \MQ(u'v)=+$}{%
					extend $\mathcal{T}$ to $(U, V\cup\{v\}, T')$\;\label{alg:ce-extend-2}
				}\lElse{%
					extend $\mathcal{T}$ to $(U\cup\{qa\}, V\cup\{v\}, T')$\label{alg:ce-extend-3}}
				\Return $\None$\;
			}
		}{
			find $q\in\delta(Q_0, u)$ and $q'\in\delta(q, a)$ such that $\MQ(qav)=-$ and $\MQ(q'v)=+$\;
			\eIf{\upshape $\row(q')\not\subseteq \{v'\in V\mid \MQ(qav')=+ \}$}{
				give $a$ to $\Lambda^{(q, q')}$ as a counterexample\;
				\Return $\textrm{update\_transition}(q, q', \Lambda^{(q, q')}, \mathcal{T}, \mathcal{H})$ \tcp*[l]{Algorithm~\ref{alg:update-transition}}
			}{
				extend $\mathcal{T}$ to $(U, V\cup\{v\}, T')$\;\label{alg:ce-extend-4}
				\Return $\None$\;
			}
		}
	}
\end{algorithm2e}
}

\algMain

Our learning algorithm for RSFAs can be seen as a combination of 
the RFA learner $\NLstar$~\cite{Bollig2009}
and the DSFA leaner $\MATstar$~\cite{Argyros2018}.
This section presents how those can be combined and
 how the new difficulties raised by the combined setting shall be solved.

Our algorithm uses an \emph{observation table}~\cite{Angluin1987}, which is used by $\NLstar$.
An observation table is $\mathcal{T}=(U, V, T)$ where $U$ is a prefix-closed set of strings, $V$ is a set of strings, and $T$ is a map $T \colon U V\rightarrow \{+, -\}$.
For each $u\in U$ and $v\in V$, we make $T(uv)=+$ if $uv\in L_*$ and $T(uv)=-$ otherwise for all $u\in U$ and $v\in V$ by asking an MQ on the string $uv$.
An observation table can be viewed as a two-dimensional table whose $(u, v)$ entry is $T(uv)$ for $u\in U$ and $v \in V$.
Let $\row(u) = \{\, v \in V \mid T(uv)=+ \,\}$ for $u\in U$.
To observe the relationship among residual languages is a key to acquire a (reduced) RFA for the target language $L_*$, but we cannot directly handle $u^{-1} L_*$.
However, we can have a finite approximation $\row(u) = (u^{-1} L_*) \cap V$.
The algorithm builds a hypothesis based on this information.
Compared to the one constructed by $\NLstar$,
our observation table differs in two points:
(1) the domain of $T$ is $U V$ rather than $(U\cup U\mD) V$ and
(2) $V$ is not necessarily suffix-closed.
The first change is inevitable to handle huge alphabets.
The second is an improvement from $\NLstar$.
Giving up the idea of keeping $V$ suffix-closed reduces the size of $V$ and consequently the number of MQs.

Our algorithm builds a hypothesis SFA using the observation table and instances of the predicate learning algorithm, checks necessary conditions of the hypothesis to be an RSFA, asks an EQ and then updates the hypothesis by modifying the observation table and/or talking with the predicate learners.
This procedure is repeated until an EQ is answered ``yes''.
The pseudo-code of our algorithm is shown in Algorithm~\ref{alg:main}.
The hypothesis is rebuilt from scratch when the observation table is updated.
Our algorithm assigns $\None$ to the variable $\mathcal{H}$ if the hypothesis has to be rebuilt.

At the beginning of the main loop, our algorithm calls Algorithm~\ref{alg:build-hypothesis} to build a hypothesis $\mathcal{H}=(\mathcal{A}, Q, Q_0, F, \Delta)$ with $Q = \{u\in U \mid \row(u)\in \Prm(\{\row(u') \mid u' \in U \})\,\}$, $Q_0 = \{u \in Q \mid \row(u) \subseteq \row(\emptystr)\}$ and $F = \{u \in Q \mid \emptystr \in \row(u) \}$, using the observation table $\mathcal{T}=(U, V, T)$.
In order to construct $\Delta$, similarly to $\MATstar$, we use the MAT learning algorithm $\Lambda$ for the underlying algebra $\varPsi$.
After Algorithm~\ref{alg:build-hypothesis} creates $|Q|^2$ instances $\Lambda^{(q, q')}$ of $\Lambda$ for all pairs of states $q,q' \in Q$,
Algorithm~\ref{alg:update-transition} communicates with each $\Lambda^{(q, q')}$ by pretending to be a MAT to obtain a transition predicate from $q$ to $q'$.
To avoid confusion with EQs and MQs {from our algorithm to the MAT}, we use small capital letters \tsc{eq} and \tsc{mq} for equivalence queries and membership queries {from a predicate learner to our algorithm}, respectively.
When $\Lambda^{(q, q')}$ asks an \tsc{mq} on $a\in\mD$, our algorithm answers $+$ if $\row(q') \subseteq \{ v \in V  \mid \mathrm{MQ}(qav) = + \}$ and $-$ otherwise.
When $\Lambda^{(q, q')}$ asks an \tsc{eq} on $\varphi \in \varPsi$, our algorithm determines the transition predicate from $q$ to $q'$ to be $\varphi$. 
An answer to the \tsc{eq} from the predicate learner will be generated by analyzing the built transitions or a counterexample for an EQ on the hypothesis automaton in the following steps.
Execution of the predicate learner is suspended until a counterexample for the \tsc{eq} is found.
When the algorithm is trying to answer an \tsc{mq} on $a$ from a predicate learner $\Lambda^{(q,q')}$, if $\{ v \in V  \mid \mathrm{MQ}(qav) = + \}$ happens to be a new prime in $\{ u^{-1}L_* \cap V \mid u \in U \cup \{qa\} \}$, $\mathcal{T}$ is extended to $(U \cup \{qa\}, V, T')$, and the procedure of building hypothesis restarts from the beginning.

\algBH

\algUT

\algCC

\algPC

After building a hypothesis, we check the following three conditions on the hypothesis by Algorithm~\ref{alg:confirm-conditions}
and modify the hypothesis if necessary, before raising an EQ.
Those conditions ensure that our final output hypothesis will be a reduced RSFA.
\begin{itemize}
	\item\emph{Condition 1}: For $q, q'\in Q$ and $a\in\mD$, $\row(q) \subseteq \row(q')$ implies $\delta(q, a)\subseteq\delta(q', a)$.
	\item\emph{Condition 2}: For $u\in U$ and $x\in\delta(Q_0, u)$, we have $\row(x)\subseteq\row(u)$.
	\item\emph{Condition 3}: For $q\in Q$ and $v\in V$, we have $v \notin\row(q)$ iff $v\notin L_q$.
\end{itemize}
Note that, in $\NLstar$, the automaton derived from an observation table always satisfies essentially the same conditions as above, which ensures that $\NLstar$ finally acquires the canonical RSA for the learning target language.
The difference comes from the fact that $\NLstar$ makes a transition so that $q' \in \delta(q,a)$ if and only if $\row(q') \subseteq \{ v \in V  \mid \mathrm{MQ}(qav) = + \}$.
On the other hand, the predicate $\varphi$ on the transition edge from $q$ to $q'$ is determined by $\Lambda^{(q, q')}$ in our setting,
which ensures no special properties required to have the conditions.
Therefore, we need additional processes to check the three conditions. 

If some of the conditions is not satisfied, we modify the observation table or the transition relation so that the conditions shall be satisfied.
The transition relation is updated by giving a counterexample to a predicate learner's \tsc{eq} and receiving a new predicate hypothesis from it.
For Condition~1, recall that $\delta(q, a)\subseteq\delta(q', a)$ for all $a \in \mD$ if and only if $(q,\varphi,x),(q',\varphi',x)  \in \Delta$ implies $\dnt{\varphi} \subseteq \dnt{\varphi'}$, i.e., $\dnt{\varphi \wedge \neg \varphi'} = \emptyset$, for all $x \in Q$. 
By assumption, this can be confirmed effectively and when $\delta(q, a)\nsubseteq \delta(q', a)$, one can find a witness $a \in \dnt{\varphi \wedge \neg \varphi'}$.
Condition~2 can be checked by naively executing transitions reading all $u\in U$ from initial states.
By performing this in length ascending order, if Condition~2 fails,
one can find $u = u'a \in U$ with $a \in \mD$, $x' \in \delta(Q_0,u')$ and $x\in\delta(x', a)$ such that $\row(x')\subseteq \row(u')$ and $\row(x)\nsubseteq \row(u)$, thanks to the prefix-closedness of $U$.
Condition~3 can also be checked by naively executing transitions reading all $v\in V$ from each $q \in Q$.
Note that, since $V$ is not necessarily suffix-closed, differently from the previous case, we do not perform this in the length ascending order.
If some $v\in V$ is found to falsify the condition, i.e.\ $\exists q\in Q, v\in \row(q) \Leftrightarrow v\notin L_q$,
we find a suffix $av'$ of $v$ with $a\in\mD$, which is not necessarily in $V$, such that $\mrm{MQ}(qv')=+ \Leftrightarrow v' \in L_q$ for all $q\in Q$ and $\mrm{MQ}(q_2 av')=+ \Leftrightarrow av' \notin L_{q_2}$ for some $q_2 \in Q$.
Such a suffix $av'$ can be found with $O(\Size{Q}\log \Size{v})$ MQs using binary search on the suffixes of $v$,
since  $\mrm{MQ}(q \emptystr)=+ \Leftrightarrow \emptystr \in L_q$ for all $q\in Q$ and $\mrm{MQ}(q_2 v)=+ \Leftrightarrow v \notin L_{q_2}$ for some $q_2 \in Q$.
Then, using such $a \in \mD$ and $q_2 \in Q$, we modify the hypothesis.
Note that by employing this technique, the query efficiency of $\NLstar$~\cite{Bollig2009}, which requires $V$ to be suffix-closed, can be improved (Corollary~\ref{cor:RFA}).

When the hypothesis is confirmed to satisfy the three conditions above, the algorithm asks an EQ.
We prove in Section~\ref{section:proofs} that if the hypothesis passes the equivalence test, it is a reduced RSFA. 

When a counterexample $w$ is given to an EQ, we process the counterexample.
Algorithm~\ref{alg:process-counterexample} is a modification of $\MATstar$ for our non-deterministic hypothesis.
At First, we check whether $\bigvee_{q \in Q_0} {\MQ(qw)} = \MQ(w)$.
If not, $Q_0$ shall be refined by adding $w$ to $V$.
If it is the case, we find a decomposition of $w = uav$ such that $u, v\in \mD^*$, $a\in\mD$ and $\bigvee_{q \in \delta(Q_0, u)} {\MQ(qav)} \neq \bigvee_{q' \in \delta(Q_0, ua)} {\MQ(q'v)}$.
We have $\bigvee_{q \in Q_0} {\MQ(qw)} = \MQ(w)$ and $\bigvee_{q \in \delta(Q_0, w)} {\MQ(q)} \neq \MQ(w)$ because $\bigvee_{q \in \delta(Q_0, w)} {\MQ(q)} = + \Leftrightarrow \delta(Q_0, w)\cap F \neq \emptyset$ and $w$ is a counterexample such that $\delta(Q_0, w)\cap F \neq \emptyset \Leftrightarrow \MQ(w) = -$.
This implies $\bigvee_{q \in \delta(Q_0, \emptystr)} {\MQ(qw)} \neq \bigvee_{q \in \delta(Q_0, w)} {\MQ(q)}$.
Therefore, such a decomposition can be found with $O(\Size{Q} \log \Size{w})$ MQs using binary search on the decompositions of $w$.
This is a non-deterministic extension of the binary search technique in~\cite{RivestSchapire1993} for learning DFAs.
Then, we can refine the transition from $q$ led by $a$.

\section{Correctness and Termination}
\label{section:proofs}

Although our hypothesis $\mathcal{H}$ is not guaranteed to be always an RFSA,
the algorithm will eventually terminate and return a reduced RSFA accepting the target language.



\begin{restatable}{theorem}{thmB}
\label{thm:correctness}
When the hypothesis $\mcal{H}$ passes the equivalence test, $\mathcal{H}$ is a reduced RSFA accepting the target language $L_*$. 
\end{restatable}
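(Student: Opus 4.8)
The plan is to assume throughout that the three Conditions hold and that the equivalence test succeeds, i.e.\ $L(\mathcal{H})=L_*$. Recalling that a state $q\in Q$ is itself a string in $U$, I aim to prove that $L_q=q^{-1}L_*$ for every state and that $q\mapsto L_q$ is a bijection from $Q$ onto $\Prm(\Res(L_*))$. The starting point is an identity independent of the Conditions: by the semantics of SFAs, $u^{-1}L(\mathcal{H})=\bigcup_{x\in\delta(Q_0,u)}L_x$ for every $u\in\mD^*$, so $L(\mathcal{H})=L_*$ gives $u^{-1}L_*=\bigcup_{x\in\delta(Q_0,u)}L_x$. Intersecting with $V$, and using Condition~3 in the form $\row(x)=L_x\cap V$ together with the table definition $\row(u)=(u^{-1}L_*)\cap V$, yields the finite counterpart $\row(u)=\bigcup_{x\in\delta(Q_0,u)}\row(x)$.

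First I would show $\mathcal{H}$ is an RSFA, i.e.\ $L_q=q^{-1}L_*$ for each state $q$. For $q^{-1}L_*\subseteq L_q$ I prove a monotonicity lemma from Condition~1: by induction on word length, $\row(x)\subseteq\row(y)$ implies $L_x\subseteq L_y$, the inductive step using $\delta(x,a)\subseteq\delta(y,a)$. Combined with Condition~2 ($\row(x)\subseteq\row(q)$ for $x\in\delta(Q_0,q)$) and the identity, this gives $q^{-1}L_*=\bigcup_{x\in\delta(Q_0,q)}L_x\subseteq L_q$. For the reverse inclusion I establish the reachability $q\in\delta(Q_0,q)$: since $\row(q)$ is prime and equals $\bigcup_{x\in\delta(Q_0,q)}\row(x)$ with every summand contained in $\row(q)$, primality forbids all summands from being strict, so some reached $x$ satisfies $\row(x)=\row(q)$; the invariant that distinct states carry distinct rows then forces $x=q$. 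Reachability and the identity give $L_q\subseteq\bigcup_{x\in\delta(Q_0,q)}L_x=q^{-1}L_*$, hence $L_q=q^{-1}L_*\in\Res(L_*)$.

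Next I would show $\mathcal{H}$ is reduced. For surjectivity onto primes, given $P=u^{-1}L_*\in\Prm(\Res(L_*))$, the identity writes $P$ as a union of the residuals $L_x$ ($x\in\delta(Q_0,u)$), each contained in $P$, so primality forces $L_x=P$ for some state $x$. For primality of each $L_q$, suppose $L_q$ were the union of the residuals strictly contained in it; rewriting each such residual as a union of prime residuals (every residual is such a union, by Denis et al.~\cite{Denis2002}), which are state languages by surjectivity, and intersecting with $V$, would exhibit $\row(q)$ as a union of state-rows strictly contained in it, contradicting primality of $\row(q)$, where strictness is preserved by the distinct-rows invariant. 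Injectivity of $q\mapsto L_q$ is immediate from $\row(q)=L_q\cap V$ and distinct rows, so the map is a bijection and $\Size{Q}=\Size{\Prm(\Res(L_*))}$.

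The step I expect to be the main obstacle is the reverse inclusion $L_q\subseteq q^{-1}L_*$, which rests entirely on reachability $q\in\delta(Q_0,q)$ and hence on the invariant that the elements of $U$ carry pairwise distinct rows; this invariant has to be shown to be preserved by every table-extension step of Algorithms~\ref{alg:update-transition}--\ref{alg:process-counterexample}. The second delicate point is the passage between the finite approximations $\row(\cdot)$ and the true languages $L_q$ and residuals of $L_*$: Condition~3 is the bridge keeping these consistent, and the decomposition of every residual as a union of prime residuals (cited from~\cite{Denis2002}) is what transfers the primality argument from $\Res(L_*)$ down to the observation-table rows.
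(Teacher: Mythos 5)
Your proof is correct and follows essentially the route the paper intends: the paper gives no details itself, deferring to the correctness proof of \NLstar{} by Bollig et al., and your argument is precisely that argument reconstructed for the symbolic setting, where Conditions~1--3 must be enforced explicitly (via Algorithm~\ref{alg:confirm-conditions}) rather than holding by construction of the transitions. As a side note, the distinct-rows invariant you flag as the main obstacle does hold (every extension of $U$ in Algorithms~\ref{alg:update-transition}, \ref{alg:confirm-conditions} and \ref{alg:process-counterexample} adds a string whose row differs from all existing rows, and adding columns to $V$ never merges rows), but it can in fact be bypassed entirely: your Condition-1 monotonicity lemma applied in both directions yields $\row(x)=\row(q)\Rightarrow L_x=L_q$, which suffices both for the reverse inclusion $L_q\subseteq q^{-1}L_*$ (no literal reachability $q\in\delta(Q_0,q)$ is needed) and for the strictness step in the primality argument.
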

One can prove Theorem~\ref{thm:correctness} in the essentially same manner as for $\NLstar$~\cite{Bollig2009}.

In order to evaluate the query complexity of our algorithm, we first discuss how many \tsc{mq}s and \tsc{eq}s a predicate learner $\Lambda^{q,q'}$ may make.
Let $D = \{\, a \in \mD \mid \row(q') \subseteq \row(qa)\}$ be the set of letters $a \in \mD$ that $\Lambda^{q,q'}$ is expected to learn, in accordance with how our learner answers \tsc{mq}s from $\Lambda^{q,q'}$.
To find a predicate for $D$, we refer to the minimum DSFA $M_*=(\mcal{A},Q_*,q_0,F_*,\Delta_*)$ that accepts the learning target $L_*$.
Note that $M_*$ is constructible from an arbitrary SFA for $L_*$~\cite{VeanesHT10}.
Let $r = \delta_*(q_0,q)$ and $Q_*^D = \{\,\delta_*(r,a) \mid a \in D\,\}$, where $\delta_*$ is the transition function induced by $\Delta_*$.\footnote{To be strict, the DSFA should be written as $(\mathcal{A}, Q_*, \{q_0\}, F_*, \Delta_*)$ and $\delta_*(q_0,q)$ is a singleton set according to what we have defined in the preliminary section, but here we follow the conventional notation for deterministic automata.}  
Then we have $D = \dnt{\bigvee_{r' \in Q_*^D} \varphi_{r,r'}}$ where $\varphi_{r,r'}$ is the predicate on the edge from $r$ to $r'$. 
Therefore, $\Lambda$ is indeed capable of learning $D$ and there are bounds on the numbers of \tsc{eq}s and \tsc{mq}s that $\Lambda$ makes,
which we write as $\mathcal{C}_{\textrm{EQ}}({D})$ and $\mathcal{C}_{\textrm{MQ}}({D})$, respectively.

For each $L\in\Res(L_*)$, let $\Gamma_{L} = \{ \{ a \in \mD \mid L' = a^{-1}L \} \mid L' \in \Res(L_*)\}$.
Then, the set of denotations of predicates that may appear in an automaton built by the learner during the learning process is represented as
$\DD = \{ \bigcup_{D\in S} D \mid S \subseteq \Gamma_{L} \text{ for } L\in Res(L_*)\}$.
Then the numbers of \tsc{eq}s and \tsc{mq}s that each predicate learner may make are bounded by $\mathcal{E}=\max_{D \in \DD}\mathcal{C}_{\textrm{EQ}}(D)$ and $\mathcal{M}=\max_{D \in \DD}\mathcal{C}_{\textrm{MQ}}(D)$, respectively.
\begin{theorem}
\label{thm:complexity}
	Let $n = |\Res(L_*)|$ and $m$ be the length of the biggest counterexample to an EQ returned by the MAT.
	Then, the proposed algorithm returns a reduced RSFA accepting $L_*$ using $\Lambda$ after raising at most $O(n^4 \mathcal{E})$ EQs and $O(n^6(\mathcal{E} + \mathcal{M}) + n^5 \mathcal{E} \log m)$ MQs.
\end{theorem}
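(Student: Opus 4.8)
The plan is to bound, in order, the dimensions of the observation table, the number of times the table is extended (equivalently, the number of times the hypothesis is rebuilt), and then to charge every query to one of a few well-understood activities. For the dimensions, I would first show that the strings in $U$ always have pairwise distinct rows: a string is appended to $U$ only when its row is a \emph{new} prime, and enlarging $V$ can only refine rows (a witness $v$ for $\row(u)\nsubseteq\row(u')$ remains a witness after any further extension of $V$), so distinct rows never merge. Since two strings with the same residual always induce the same row, the number of distinct rows is at most $|\Res(L_*)|=n$; hence $|U|\le n$ and $|Q|\le n$, and at most $n^2$ predicate-learner instances $\Lambda^{(q,q')}$ coexist.

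To bound the number $R$ of table extensions I would use the potential $\Psi=\bigl|\{(u,u')\in U\times U\mid \row(u)\nsubseteq\row(u')\}\bigr|$. It never exceeds $|U|^2\le n^2$, and it is non-decreasing: enlarging $V$ can only turn $\subseteq$ into $\nsubseteq$, while enlarging $U$ only adds pairs. The heart of the argument is a case analysis, over every site in Algorithms~\ref{alg:update-transition}, \ref{alg:confirm-conditions} and \ref{alg:process-counterexample} where $\mathcal{T}$ is extended, showing that the extension strictly increases $\Psi$: a $U$-extension inserts a row distinct from all present ones and so creates at least one new $\nsubseteq$ pair, whereas each $V$-extension adds exactly the string (typically $av$) that makes a designated pair which satisfied $\row(q)\subseteq\row(q')$ fall out of the $\subseteq$ relation. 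Starting from $\Psi=0$ with $U=\{\emptystr\}$, this gives $R\le n^2$, and since each extension appends one string, $|V|=O(n^2)$ as well. This step is the main obstacle: the bookkeeping is delicate because $V$ is not suffix-closed and because the transition predicates are supplied by the learners $\Lambda^{(q,q')}$ rather than read off the table as in $\NLstar$, so one must verify the potential increase separately for each branch of Condition~1, Condition~2, Condition~3 and of process\_counterexample.

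For the equivalence queries I would observe that the predicate learners are reinitialised only inside build\_hypothesis, i.e.\ once per \emph{epoch} (the interval between consecutive rebuilds); hence within an epoch the at most $n^2$ instances together ask at most $n^2\mathcal{E}$ \tsc{eq}s. Every MAT EQ that returns a counterexample without extending the table is answered by a transition update, which consumes one predicate-learner \tsc{eq}, so there are at most $n^2\mathcal{E}+1$ MAT EQs per epoch. Multiplying by $R=O(n^2)$ epochs yields $O(n^4\mathcal{E})$ EQs in total.

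For the membership queries I would attribute them to four activities. (i) Answering an \tsc{mq} of a predicate learner costs $|V|$ MQs to form $\{v\in V\mid\MQ(qav)=+\}$; across all epochs there are at most $R\,n^2\mathcal{M}=O(n^4\mathcal{M})$ such \tsc{mq}s, contributing $O(n^4\mathcal{M}\cdot|V|)=O(n^6\mathcal{M})$. (ii) Each invocation of confirm\_conditions halts at the \emph{first} violation it detects and therefore spends only $O(|V|+n\log m)$ MQs (one temp\_row computation, plus the $O(n\log m)$ binary search over suffixes for Condition~3); as confirm\_conditions is called $O(n^4\mathcal{E})$ times, this contributes $O(n^6\mathcal{E}+n^5\mathcal{E}\log m)$. (iii) Processing a counterexample costs $O(|V|+n\log m)$ MQs, dominated by the non-deterministic binary search in the style of~\cite{RivestSchapire1993}, and there are $O(n^4\mathcal{E})$ counterexamples, giving the same bound. (iv) Filling the table entries over the whole run costs $O(|U|\,|V|)=O(n^3)$ with caching, which is of lower order. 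Summing the four contributions gives $O(n^6(\mathcal{E}+\mathcal{M})+n^5\mathcal{E}\log m)$ MQs, as claimed.
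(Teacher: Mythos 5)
Your overall architecture and your query accounting are essentially the paper's: $|U|\le n$ because distinct rows never merge under refinement of $V$, $O(n^2)$ table extensions, at most $O(n^2\mathcal{E})$ transition updates between consecutive extensions giving $O(n^4\mathcal{E})$ EQs, and the same four charging categories for MQs (answering \textsc{mq}s at $|V|$ MQs each, $O(|V|+n\log m)$ per condition check and per counterexample, $O(n^3)$ for filling the table). The genuine gap is in the step you yourself flag as the main obstacle: the claim that \emph{every} table extension strictly increases $\Psi=\bigl|\{(u,u')\in U\times U\mid \row(u)\nsubseteq\row(u')\}\bigr|$ is false. Adding a string to $V$ can only break inclusions whose \emph{larger} side fails to gain the new suffix, and at several extension sites the inclusion that the algorithm is refuting involves the row of a one-letter extension $qa\notin U$, not two members of $U$. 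Concretely: at Line~\ref{alg:ce-extend-1} of Algorithm~\ref{alg:process-counterexample}, when $w\in L_*$ but $\MQ(qw)=-$ for all $q\in Q_0$, adding $w$ to $V$ enlarges $\row(\emptystr)$ (and possibly rows of other $u$ with $uw\in L_*$) while every $q\in Q_0$ keeps $\row(q)\subseteq\row(\emptystr)$; no pair in $U\times U$ need fall out of the inclusion relation, so $\Psi$ can stay constant. The real progress there is that the prime decomposition of $\row(\emptystr)$ must change (after rebuilding, some prime below $\row(\emptystr)$ contains $w$), which your potential is blind to. The same failure occurs at Line~\ref{alg:ce-extend-2} and in the Condition~3 handler when $\MQ(q_2av')=-$: the broken inclusion is $\row(q_3)\subseteq\row(q_2a)$ with $q_2a\notin U$, and nothing guarantees a simultaneous break inside $U\times U$.

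This is exactly why the paper does not use a single $U\times U$ potential but imports the four-component measure of Bollig et al.: $(l_U,l,p,i)$ with $l_U=|\{\row(u)\mid u\in U\}|$, $l=|R|$ for $R=\{\{v\in V\mid uav\in L_*\}\mid u\in U,\ a\in\mD\cup\{\emptystr\}\}$ (rows of $U$ \emph{and of all one-letter extensions}), $p=|\Prm(\{\row(u)\mid u\in U\})|$, and $i$ the number of strict inclusions within $R$. The case analysis is then three-way: each extension either increases $l_U$, or increases $l$ by $k$ while $i$ grows by at most $kl+k(k-1)/2$, or leaves $l$ fixed while $i$ \emph{decreases} or $p$ \emph{increases}; since $l_U,l,p\le n$, this yields the $O(n^2)$ bound. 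Note the direction is also different from yours: inclusion-breaking is registered as a \emph{decrease} of $i$ (a consumable budget replenished only when $l$ grows), and prime-structure changes are registered by $p$ --- both mechanisms are needed to cover the sites above. To repair your proof you would have to either adopt this richer measure or extend your potential to pairs drawn from $U\cup U\mD$ together with a separate prime-counting term, and then redo the site-by-site verification; as written, the bound $R\le n^2$ on the number of extensions, and hence both query bounds, do not follow.
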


\begin{proof}
	At first, we will prove that the observation table $\mathcal{T}$ cannot be extended beyond $O(n^2)$ times.
	Following~\cite{Bollig2009}, we create a tuple $(l_U, l, p, i)$ of measures where $l_U=|\{\row(u) \mid u\in U\}|$, $l=|R|$, $R=\{\{v\in V\mid uav\in L_* \} \mid u\in U, a\in\mD\cup\{\emptystr\}\}$, $p=|\Prm(\{\row(u) \mid u\in U\})|$, $i=|\{(r, r')\mid r, r'\in R, r\subsetneq r'\}|$.
	After each extension of the table, either (1) $l_U$ is increased or (2) $l$ is increased by $k>0$ and, simultaneously, $i$ is increased by at most $kl+k(k - 1)/2$ or (3) $l$ stays the same and $i$ decreases or $p$ increase.
	However, $l_U$, $l$, $p$ cannot increase beyond $n$. Therefore, $\mathcal{T}$ cannot be extended beyond $O(n^2)$ times.

	Recall that $\Delta$ is updated only when a counterexample to some predicate learner is found.
	Such counterexamples are found at most $|\Delta|\mathcal{E}$ times without extending $\mathcal{T}$.
	$|\Delta|$ can be bounded by $O(n^2)$.
	Thus, $\Delta$ is updated at most $O(n^2\mathcal{E})$ times without extending $\mathcal{T}$.
	Therefore, The algorithm must always reach an EQ and terminate after making at most $O(n^4\mathcal{E})$ EQs.

	The algorithm asks MQs for 
	(1) filling the observation table after extending the table,
	(2) answering \tsc{mq}s from predicate learners at Line~\ref{all:mq2} of Algorithm~\ref{alg:update-transition},
	(3) checking $\bigvee_{q \in Q_0} {\MQ(qw)} \neq \MQ(w)$ for a counterexample $w$ at Line~\ref{all:mq3} of Algorithm~\ref{alg:process-counterexample},
	(4) finding a decomposition of a counterexample at Line~\ref{all:mq4} of Algorithm~\ref{alg:process-counterexample},
	(5) finding a suffix of $v\in V$ which does not satisfy Condition~3 at Line~\ref{all:mq5} of Algorithm~\ref{alg:confirm-conditions},
	and
	(6) the other purpose, where we check one or two rows for deciding whether to extend the observation table or to update the transition relation, after finding a decomposition of a counterexample or when one of the three conditions is found to be unsatisfied.

	The total number of MQs used for (1) is $O(n^3)$, because $|U|$ and $|V|$ is bounded by $n$ and $O(n^2)$, respectively.
	Concerning (2), we use at most $|\Delta||V|\mathcal{M}$ MQs for each intermediate observation table $\mathcal{T}$. Thus, $O(n^6\mathcal{M})$ MQs are asked in total.
	We perform (3) and (4) at most $O(n^4\mathcal{E})$ times. For each event, (3) uses $O(n)$ MQs and (4) uses $O(n\log m)$ MQs. All in all, $O(n^5\mathcal{E}\log m)$ MQs are asked for (3) and (4).
	Each time (5) happens, $O(n\log m)$ MQs are raised, and (5) happens $O(n^4\mathcal{E})$ times. In total, $O(n^5\mathcal{E}\log m)$ MQs are asked for (5).
	For each decision of (6), $O(|V|)$ MQs are used, and it takes place $O(n^4\mathcal{E})$ times. Hence, $O(n^6\mathcal{E})$ MQs are made for (6).
	Therefore, the algorithm asks at most $O(n^6(\mathcal{E} + \mathcal{M}) + n^5 \mathcal{E} \log m)$ MQs.
\end{proof}

A query complexity comparison among previous algorithms and our algorithm for related classes of automata is shown in Table~\ref{table:query-comparison}.
The query complexity of the proposed algorithm is higher than that of $\MATstar$~\cite{Argyros2018}, especially for MQs.
This is also true in the non-symbolic case.
However, Bollig et al.~\cite{Bollig2009} showed that in practice learning RFAs requires less queries than learning DFAs.
In Section~\ref{sec:experiments}, we show that it is also the case in the learning of RSFAs and DSFAs. 

We remark that the query efficiency of $\NLstar$~\cite{Bollig2009} can be improved by using our technique.
The algorithm $\NLstar$~\cite{Bollig2009} adds all the suffixes of a counterexample to $V$, which makes $V$ suffix-closed and rather large.
Suffix-closedness of $V$ ensures the correctness of $\NLstar$.
Namely, the property is used in the proof of Lemma~2 of~\cite{BolligHKL2008}, which states that Condition~3 of our paper always holds for $\NLstar$.
That is, by employing our counterexample processing and Condition~3 assurance procedure,
 the upper bound of the size of the table is improved from $O(n^3m\Size{\mD})$ to $O(n^3\Size{\mD})$ with additional $O(n^3 \log m)$ MQs (binary search with $O(n\log m)$ MQs can occur $O(n^2)$ times).
\begin{corollary}\label{cor:RFA}
	Canonical RFAs can be learned using $O(n^2)$ EQs and $O(n^3|\Sigma| + n \log m)$ MQs.
\end{corollary}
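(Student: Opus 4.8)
The plan is to specialize the general complexity bound of Theorem~\ref{thm:complexity} to the non-symbolic case, where the underlying ``Boolean algebra'' is simply the powerset algebra over the finite alphabet $\Sigma$, and where each transition predicate is just the set of letters leading that transition. First I would observe that in this setting the predicate learner $\Lambda^{(q,q')}$ need not be a sophisticated learner at all: the target set $D = \{\, a \in \Sigma \mid \row(q') \subseteq \row(qa)\,\}$ is a finite subset of $\Sigma$, and the naive predicate learner that maintains $D$ explicitly can identify it with a single \tsc{eq} after being corrected on the letters it has not yet seen. Concretely, one can take the learner that, on each counterexample letter $a$, simply adds $a$ to (or removes it from) its current guess and re-poses an \tsc{eq}. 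Each such correction to $\Lambda^{(q,q')}$ corresponds to discovering one new letter of $D$, so the number of \tsc{eq}s any single predicate learner makes is bounded by $\mathcal{E} = O(|\Sigma|)$, and since letters are handed over explicitly as counterexamples there is essentially no membership querying within the algebra, giving $\mathcal{M} = O(1)$.

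Next I would substitute these values of $\mathcal{E}$ and $\mathcal{M}$ into the bounds of Theorem~\ref{thm:complexity}, but this naive substitution is too crude: it would give $O(n^4|\Sigma|)$ EQs rather than the claimed $O(n^2)$. The key refinement is that in the non-symbolic case the transition relation is learned directly rather than through repeated hypothesis--counterexample rounds with the teacher. The discovery of each new transition letter does not require a fresh EQ to the real MAT; instead, as explained in the paragraph preceding the corollary, the improved counterexample processing and Condition~3 assurance let us learn the transitions from the structure already present in the observation table. So I would re-examine the termination argument of Theorem~\ref{thm:complexity} and note that the number of EQs to the MAT is governed only by how often the observation table is genuinely refined, which is $O(n^2)$ (the bound on table extensions established in the first part of that proof), rather than by $n^4 \mathcal{E}$.

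For the MQ bound I would reuse the six-way accounting from the proof of Theorem~\ref{thm:complexity}, again with $\mathcal{E} = O(|\Sigma|)$ and $\mathcal{M} = O(1)$, but now tracking the improvement that $V$ need not be suffix-closed. The dominant term becomes the cost of filling the table: with $|U| = O(n)$ and $|V| = O(n)$ rows and columns, and with each of the $|U\Sigma|$ one-letter extensions contributing an $\Sigma$-factor, the table has size $O(n^3|\Sigma|)$, accounting for the $O(n^3|\Sigma|)$ term. The additional $O(n\log m)$ cost arises from the binary-search counterexample decomposition of Line~\ref{all:mq4} and the Condition~3 suffix search of Line~\ref{all:mq5}; since each binary search costs $O(n\log m)$ MQs and there are $O(n^2)$ table refinements, one might fear an $O(n^3\log m)$ total, but the corollary's stated $O(n\log m)$ reflects that in the non-symbolic RFA setting only a single such search is charged per overall counterexample rather than per internal refinement. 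The main obstacle will be exactly this bookkeeping: carefully justifying that the EQ count collapses from $O(n^4|\Sigma|)$ to $O(n^2)$ and the binary-search MQ contribution collapses to $O(n\log m)$, rather than naively plugging $\mathcal{E}=O(|\Sigma|)$ into the symbolic bounds. I would therefore present the corollary's proof as a direct re-run of the Theorem~\ref{thm:complexity} analysis in the degenerate algebra, explicitly replacing each occurrence of $\mathcal{E}$ and $\mathcal{M}$ and noting where the non-symbolic structure removes the extra polynomial factors.
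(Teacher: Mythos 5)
Your route---specializing Theorem~\ref{thm:complexity} with a naive set learner over the powerset algebra---is not the paper's proof and, more importantly, the patch you apply to it does not go through. The paper obtains the corollary by modifying $\NLstar$ itself: keep $\NLstar$'s table with domain $(U \cup U\Sigma)V$ and its \emph{direct} transition construction ($q' \in \delta(q,a)$ iff $\row(q') \subseteq \row(qa)$), so there are no predicate learners and no \tsc{eq} rounds at all, and graft on only two ingredients from this paper: the binary-search counterexample processing (adding a single suffix to $V$ per counterexample instead of all $m$ suffixes) and the Condition-3 assurance procedure (which replaces the role that suffix-closedness of $V$ plays in Lemma~2 of Bollig et al.). Then every EQ counterexample forces a table refinement, the $O(n^2)$ bound on refinements gives $O(n^2)$ EQs, and the table costs $|U \cup U\Sigma|\cdot|V| = O(n|\Sigma|\cdot n^2) = O(n^3|\Sigma|)$ MQs. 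You half-recognize this ("the transition relation is learned directly"), but presenting it as a re-examination of the termination argument of Theorem~\ref{thm:complexity} is not valid: in Algorithms~\ref{alg:main}--\ref{alg:process-counterexample}, each predicate-learner correction triggered by an EQ counterexample consumes a genuine EQ to the MAT, and the learners are re-initialized at every table extension, so the $O(n^4\mathcal{E}) = O(n^4|\Sigma|)$ EQ count is real unless you abandon the predicate-learner mechanism entirely. Your $U\Sigma$-rows accounting for the $n^3|\Sigma|$ term also contradicts the algorithm you claim to be specializing, whose table domain is $UV$ precisely to remove the $|\Sigma|$ factor.

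The second gap is the $O(n\log m)$ term, where your argument that "only a single such search is charged per overall counterexample" is unsupported and would fail. Each binary search (the decomposition at Line~\ref{all:mq4} and the Condition-3 suffix search at Line~\ref{all:mq5}) costs $O(n\log m)$ MQs, since each probe evaluates $\bigvee_{q \in \delta(Q_0,u)}\MQ(qav)$ over up to $n$ states, and such a search can occur once per refinement, i.e., $O(n^2)$ times; charging one search per counterexample does not help because there are $O(n^2)$ counterexamples. The paper's own accounting says exactly this: "additional $O(n^3\log m)$ MQs (binary search with $O(n\log m)$ MQs can occur $O(n^2)$ times)," and Table~\ref{table:query-comparison} accordingly records $O(n^3|\Sigma| + n^3\log m)$. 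The "$n\log m$" in the corollary statement is inconsistent with both the surrounding text and the table and is evidently a typographical slip; a correct proof yields $O(n^3\log m)$ for this term, and no honest bookkeeping along your lines collapses it further.
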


\begin{table}[t]
	\centering
	\caption{The upper bounds of EQs and MQs}
	\begin{tabular}{|cc|c|c|c|}
		\hline
		& & Deterministic & \multicolumn{2}{c|}{Residual} \\
		\hline\hline
		FA & EQ & $n$ & $O(n^2)$ & $O(n^2)$ \\
		& MQ & $O(n^2|\Sigma| + n\log m)$~\cite{IsbernerHS14,RivestSchapire1993} & $O(n^3m|\Sigma|)$~\cite{Bollig2009} & $O(n^3|\Sigma| + n^3 \log m)$~[Ours]\\
		\hline
		SFA & EQ & $O(n^3\mathcal{E})$ & \multicolumn{2}{c|}{$O(n^4\mathcal{E})$} \\
		& MQ & $O(n^4\mathcal{M} + n^4\mathcal{E}\log m)$~\cite{Argyros2018}& \multicolumn{2}{c|}{$O(n^6(\mathcal{E} + \mathcal{M}) + n^5\mathcal{E}\log m)$~[Ours]} \\
		\hline
	\end{tabular}
	\label{table:query-comparison}
\end{table}

\section{Experiments}\label{sec:experiments}

To evaluate the practical performance of our algorithm, we compare it
with Argyros and D'Antoni's\linebreak  algorithm $\MATstar$ for DSFAs~\cite{Argyros2018}.
They implemented their algorithm on the open-source library\linebreak \texttt{symbolicautomata}.
Our algorithm is also implemented on the same library.

\subsection{Setting}

We generated learning target languages as follows, which can be seen as the symbolic counterpart of the languages given by Denis et al.~\cite{Denis2004} for comparing DFAs and RFAs.
Denis et al.\ used NFAs over a two-letter alphabet for generating random regular languages.
We use non-deterministic (not necessarily residual) SFAs over the entire 32-bit integers, i.e. $\mD=\{a\in\mathbb{N} \mid -2^{31} \le a \le 2^{31} - 1 \}$.
SFAs we use are on the inequality algebra over $\mD$, whose atomic predicates are of the form $X \leq k$ for some $k \in \mD$, where $X$ is the free variable, and
their semantics is given by $\dnt{X \le k} = \{\, a \in \mD \mid a \le k \,\}$.
Using negation, intersection and union, predicates define unions of intervals.
We abbreviate $\neg(X\leq l)\wedge (X\leq r)$ to $l+1 \le X \le r$.
We also implemented a MAT learning algorithm for the algebra and used it as $\Lambda$. 
It learns arbitrary subsets $S \subseteq \mD$ by asking at most $K$ \tsc{eq}s and $O(K\log \Size{\mD})$ \tsc{mq}s using binary search
where $K$ is the number of ``borders'' $\Size{\{a\in\mD \mid a\in S \Leftrightarrow a - 1 \notin S \}}$ in the set $S$.
SFAs are randomly generated using four parameters:
the number $n_Q$ of states, the number $n_{\delta}$ of transitions per state, and the probabilities $p_I$ and $p_F$ for each state of being an initial and final state, respectively.
For each state $q\in Q$, we randomly pick a destination state $q'\in Q$ and two integers $l, r\in \mD$ such that $l \le r$ and add $(q,\ l \le X \le r,\ q')$ to $\Delta$.
This addition is performed $n_{\delta}$ times for each state permitting duplication of the destination state.
If we choose the same state $q'$ twice or more as a destination of a state $q$, the transition predicate from $q$ to $q'$ will be the union of two or more randomly chosen intervals.
We used the parameters $n_Q=8$, $n_{\delta}=2$ and $p_I=p_F=0.5$ in our experiments.\footnote{The source code is available at \url{https://github.com/ushitora/RSFA-QueryLearning}.}

\begin{figure}[t]
	\centering
	\begin{minipage}[t]{0.49\hsize}
		\centering
		\includegraphics[scale=0.45]{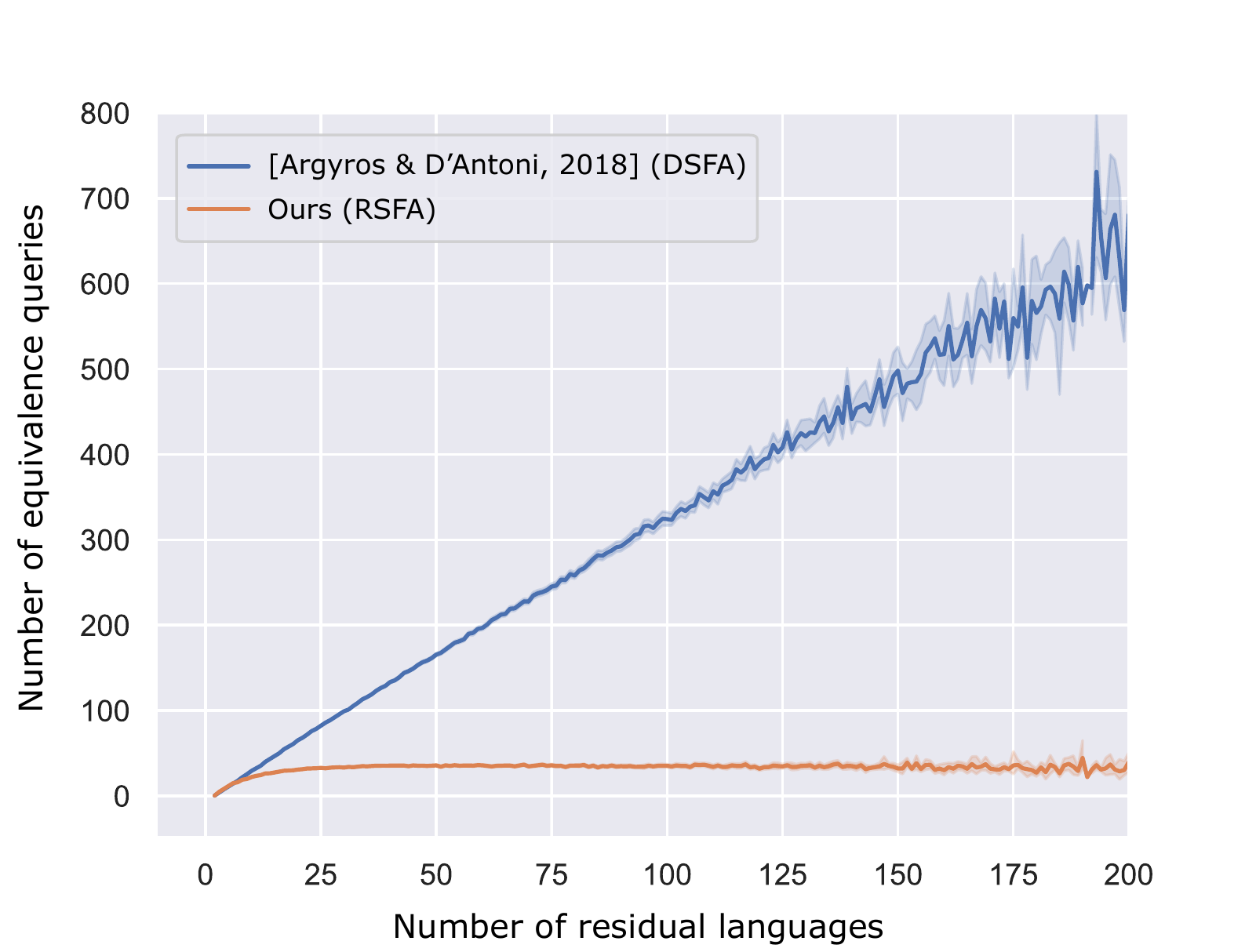}\\
		\ \ \ \scriptsize{(a)}
	\end{minipage}
	\begin{minipage}[t]{0.49\hsize}
		\centering
		\includegraphics[scale=0.45]{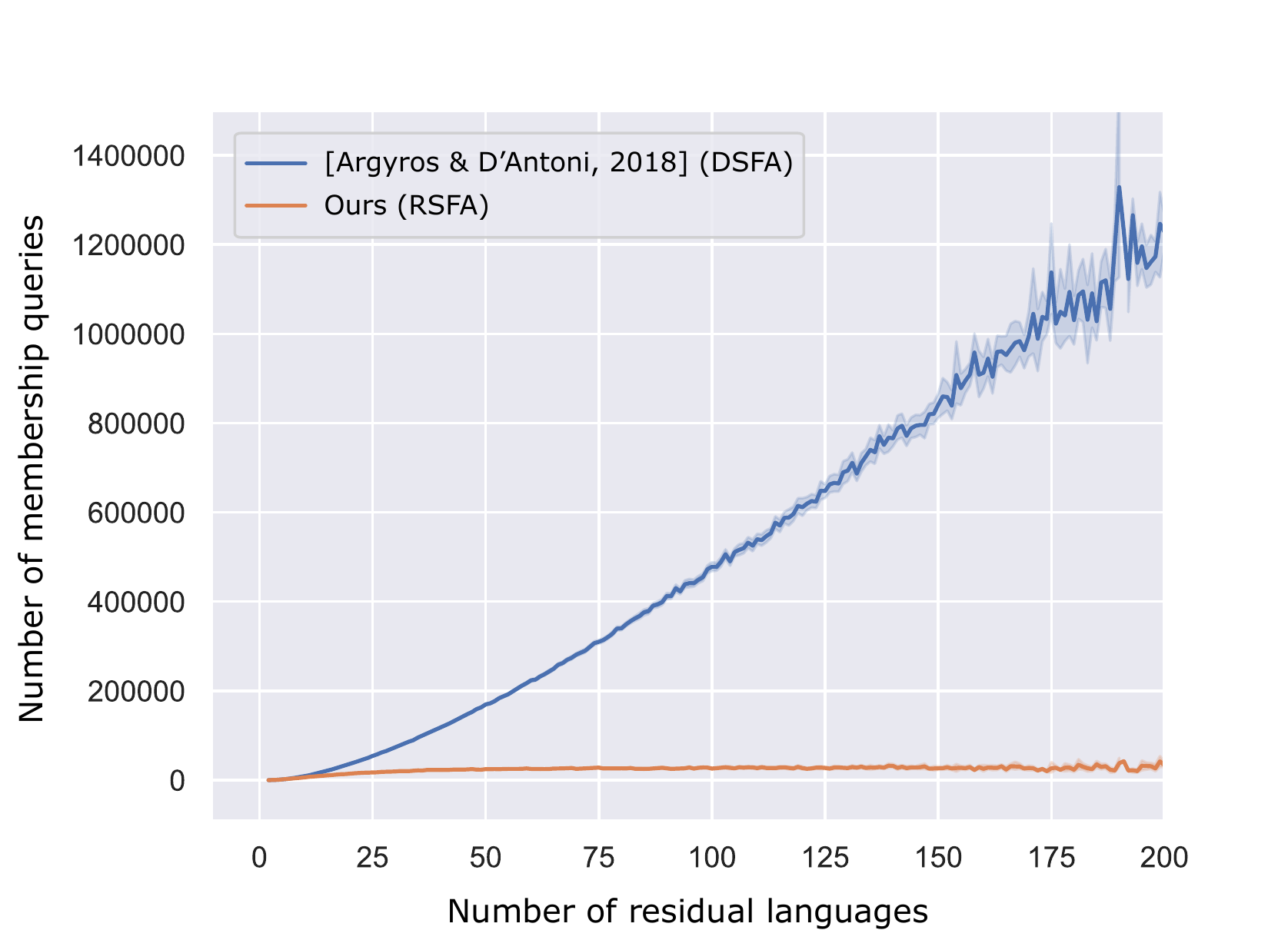}\\
		\ \ \ \scriptsize{(b)}
	\end{minipage}
	\caption{The average number of (a) EQs and (b) MQs relative to the number of the total residual languages, i.e. the size of the minimal DSFA. Error bands show a 95\% confidence interval.}
	\label{fig:experiment-int-interval}
\end{figure}

\subsection{Results}

We generated 50,000 non-deterministic SFAs and let our algorithm learn the languages defined by those SFAs.
Figures~\ref{fig:experiment-int-interval}(a) and (b) show the average numbers of EQs and MQs raised by our algorithm relative to the number of the residual languages, respectively.
The results are in contrast with our worst case analysis in Theorem~\ref{thm:complexity}.
Our algorithm makes much fewer queries than $\MATstar$.
The gap between the numbers of the queries made by $\MATstar$ and our algorithm looks even bigger than 
that between those by $\Lstar$ and $\NLstar$ observed in the experiments performed by Bollig et al.~\cite{Bollig2009} and by Angluin et al.~\cite{AngluinEF15}.
In the remainder of this section, we discuss why using non-deterministic version should be more beneficial in the learning of symbolic automata than non-symbolic automata.

\subsection{Analyses and Discussions}

Denis et al.~\cite{Denis2004} have observed that most languages of randomly generated NFAs have few prime residual languages, i.e.,
the number of states of a reduced RFA tends to be much fewer than that of the minimum DFA for the same language. 
Even in the middle of the learning process, hypotheses built by our learner tend to be smaller than the ones by $\MATstar$ (c.f.\ Fig.~\ref{fig:experiment-int-interval-V}(a)), in spite of the worst case analysis.
This tendency should be essentially the same in the non-symbolic and symbolic cases.
However, the automaton size has a bigger effect on the query complexity in the symbolic case than in the non-symbolic case.
Recall that most MQs and EQs to the MAT are used to answer \tsc{mq}s and \tsc{eq}s from the predicate learners when learning SFAs. 
We have $|Q|^2$ predicate learners if our current hypothesis has $|Q|$ states.
As a consequence, the benefit in the query complexity to reduce the number of states in a hypothesis automaton is much bigger in SFA learning
and thus using residual automata is quite advantageous.

In addition, when learning RSFAs, we are granted to be flexible to some extent in identification of transition predicates.
Concerning a transition from $q$ to $q'$, let $S_{\mathrm{saturated}}=\{a\in\mD \mid L_{q'} \subseteq a^{-1} L_{q} \}$ and $S_{\mathrm{simplified}}=\{a\in\mD \mid L_{q'} \subseteq a^{-1} L_{q}, (\nexists q'' \in Q, L_{q'}\subsetneq L_{q''} \subseteq a^{-1} L_q) \}$.
For any $\varphi$ with $S_{\mathrm{simplified}} \subseteq \dnt{\varphi} \subseteq S_{\mathrm{saturated}}$, changing the transition predicate between $q$ and $q'$ to $\varphi$ does not change the language defined by the automaton.  
That is, when learning an RSFA, as long as the predicate learner $\Lambda^{(q,q')}$ outputs a predicate $\varphi$ satisfying $S_{\mathrm{simplified}} \subseteq \dnt{\varphi} \subseteq S_{\mathrm{saturated}}$, the RFSA constructed using that output may pass the equivalence test.
For instance, in the inequality algebra, when $S_{\mathrm{saturated}}$ consists of many intervals, a ``lazy'' predicate whose semantics consists of fewer intervals may be accepted, which can be achieved with fewer queries.
This nature is not observed neither in RFAs nor DSFAs.

\begin{figure}[t]
	\centering
	\begin{minipage}[t]{0.49\hsize}
		\centering
		\includegraphics[scale=0.45]{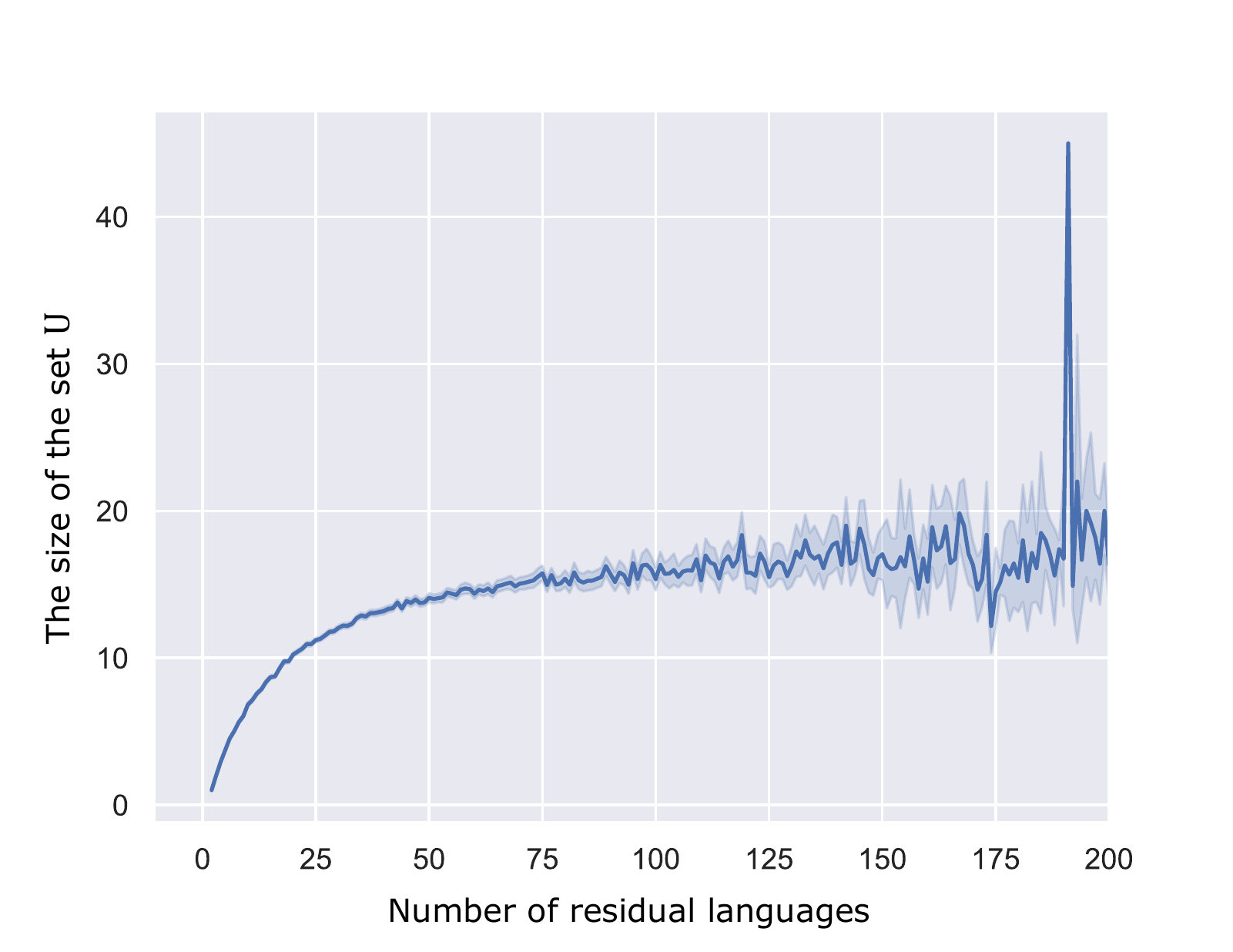}\\
		\ \ \ \scriptsize{(a)}
	\end{minipage}
	\begin{minipage}[t]{0.49\hsize}
		\centering
		\includegraphics[scale=0.45]{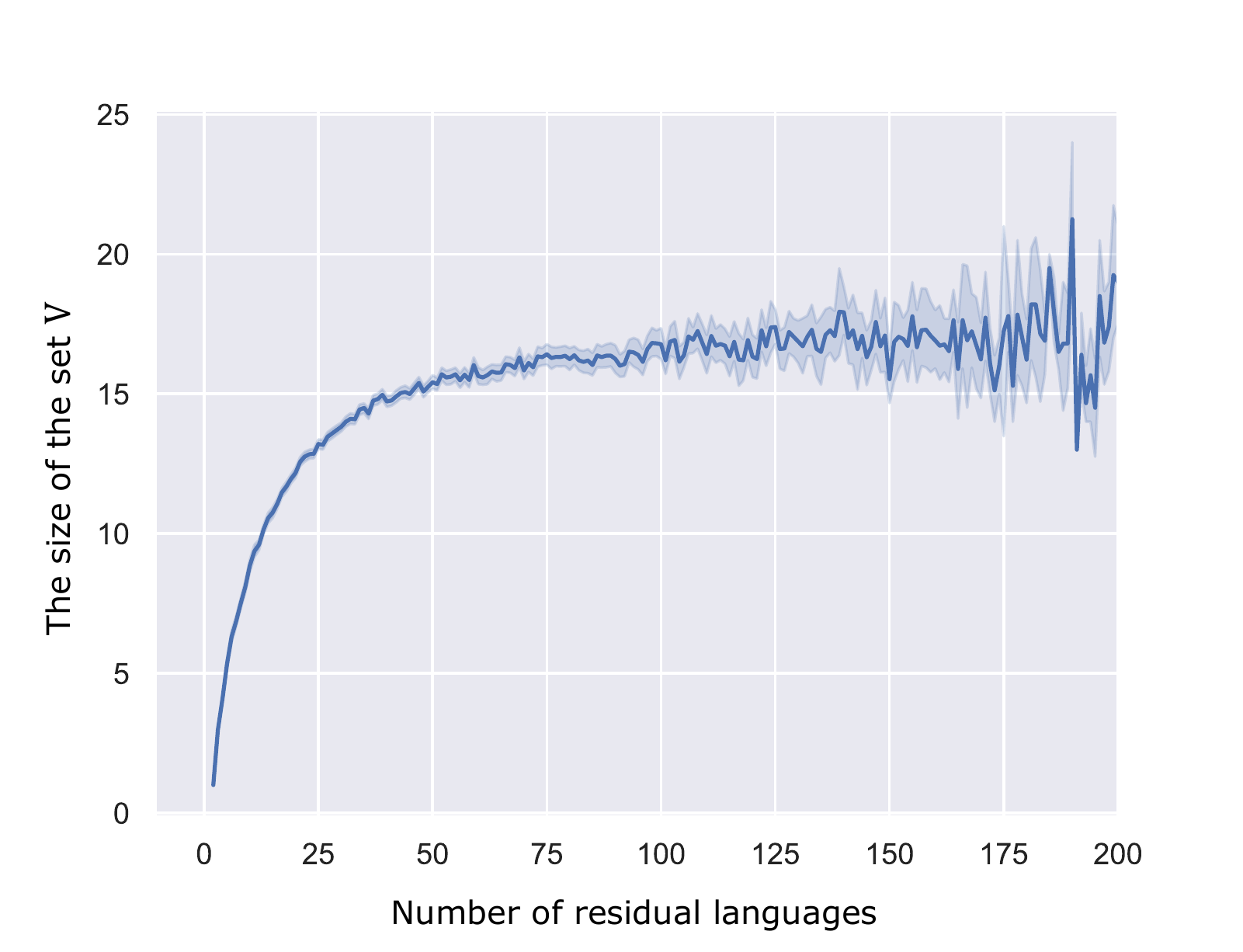}\\
		\ \ \ \scriptsize{(b)}
	\end{minipage}
	\caption{The average size of (a) $U$ and (b) $V$ relative to the number of the residual languages with error bands with a 95\% confidence interval. The number of states in an (intermediate) hypothesis is bounded by $|U|$.}
	\label{fig:experiment-int-interval-V}
\end{figure}
At last, we present another observation that explains why learning residual automata can be more efficient than deterministic ones, which applies to the non-symbolic case, too.
To answer each \tsc{mq} from a predicate learner, our algorithm uses $|V|$ MQs to the MAT.
Figure~\ref{fig:experiment-int-interval-V}(b) shows the average of $|V|$ in our experiments.
In the worst case analysis, $|V|$ may increase up to $O(n^2)$, but in most of these experiments, $|V|$ is much smaller than $n$, which keeps the number of MQs in our algorithm small.
An element $v$ is added to $V$ for denying at least one inclusion relation of a pair of rows, which occurs $O(n^2)$ times in the worst case.
In practice, 
one added element $v$ to $V$ may falsifies inclusions for many pairs of residual languages, while there is a lot of pairs between which inclusion properly hold, which will never been denied.
Therefore, $|V|$ tends to be much smaller than the worst case, and it saves many MQs by our algorithm.

\section*{Acknowledgement}
The research is supported by JSPS KAKENHI Grant Number JP18K11150.

\bibliographystyle{eptcs}
\bibliography{docs/ref}

\end{document}